\theoremstyle{plain}
\newtheorem{thm}{Theorem}[section]
\newtheorem{lem}[thm]{Lemma}
\newtheorem{cor}[thm]{Corollary}
\newtheorem{obs}[thm]{Observation}
\theoremstyle{definition}
\theoremstyle{remark}
\newcommand{\forme}[1]{}
\title{The Italian bondage and reinforcement numbers of digraphs}
\author[Kim]{Kijung Kim}
\address{Department of Mathematics, Pusan National University, Busan 46241, Republic of Korea}
\email{knukkj@pusan.ac.kr}
\date{\today}
\subjclass[2010]{05C69}
\begin{document}

\begin{abstract}
An \textit{Italian dominating function} on a digraph $D$ with vertex set $V(D)$
is defined as a function $f : V(D) \rightarrow \{0, 1, 2\}$  such that every vertex $v \in V(D)$ with
$f(v) = 0$ has at least two in-neighbors assigned $1$ under $f$ or one in-neighbor $w$ with $f(w) = 2$.
The \textit{weight} of an Italian dominating function $f$ is the value $\omega(f) = f(V(D)) = \sum_{u \in V(D)} f(u)$.
The \textit{Italian domination number} of a digraph $D$, denoted by $\gamma_I(D)$, is the minimum taken over the weights of all Italian dominating functions on $D$.
The \textit{Italian bondage number} of a digraph $D$, denoted by $b_I(D)$, is the minimum number of arcs of $A(D)$ whose removal in $D$
results in a digraph $D'$ with $\gamma_I(D') > \gamma_I(D)$.
The \textit{Italian reinforcement number} of a digraph $D$, denoted by $r_I(D)$, is the minimum number of extra arcs whose addition to $D$
results in a digraph $D'$ with $\gamma_I(D') < \gamma_I(D)$.
In this paper, we initiate the study of Italian bondage and reinforcement numbers in digraphs and present
some bounds for $b_I(D)$ and $r_I(D)$.
We also determine the Italian bondage and reinforcement numbers of some classes of digraphs.

\bigskip

\noindent
{\footnotesize \textit{Key words:} Italian domination number, Italian bondage number, Italian reinforcement number}
\end{abstract}

\maketitle

\insert\footins{\footnotesize
This research was supported by Basic Science Research Program through the National Research Foundation of Korea funded by the Ministry of Education (2020R1I1A1A01055403).}

\section{Introduction}\label{sec:intro}

Let $D=(V,A)$ be a finite simple digraph with vertex set $V=V(D)$ and arc set $A=A(D)$.
The order $n(D)$ of a digraph is the size of $V(D)$.
For an arc $uv \in A(D)$, we say that $v$ is an \textit{out-neighbor} of $u$ and
$u$ is an \textit{in-neighbor} of $v$.
We denote the set of in-neighbors and out-neighbors of $v$ by $N_D^-(v)$ and $N_D^+(v)$, respectively.
We write $deg_D^-(v)$ and $deg_D^+(v)$ for the size of $N_D^-(v)$ and $N_D^+(v)$, respectively.
Let $N_D^-[v] = N_D^-(v) \cup \{ v\}$ and $N_D^+[v] = N_D^+(v) \cup \{ v\}$.
For s subset $S$ of $V(D)$, we define $N^+(S) = \bigcup_{v \in S} N_D^+(v)$ and $N^+[S] = \bigcup_{v \in S} N_D^+[v]$.
The \textit{maximum out-degree} and \textit{maximum in-degree} of a digraph $D$ are denoted by $\Delta^+(D)$ and $\Delta^-(D)$, respectively.

For a digraph $D$, a subset $S$ of $V(D)$ is a \textit{dominating set} if $\bigcup_{v \in S} N_D^+[v] = V(D)$.
The \textit{domination number} $\gamma(D)$ is the minimum cardinality of a dominating set of $D$.
The concept of the domination number of a digraph was introduced in \cite{CHY}.
The \textit{bondage number} $b(D)$ of a digraph $D$ is the minimum number of arcs of $A(D)$ whose removal in $D$
results in a digraph $D'$ with $\gamma(D') > \gamma(D)$.
The concept of the bondage number of a digraph was proposed in \cite{CD}.
The \textit{reinforcement number} $r(D)$ of a digraph $D$ is the minimum number of extra arcs whose addition to $D$
results in a digraph $D'$ with $\gamma(D') < \gamma(D)$.
The concept of the reinforcement number of a digraph was introduced in \cite{HWX}.

Among the variations of domination, so called Italian domination of graphs is introduced in \cite{CHHM}.
The authors of \cite{CHHM} present bounds relating the Italian domination number to some other domination parameters.
The authors of \cite{HK} characterize the trees $T$ for which $\gamma(T) +1 =\gamma_I(T)$ and also
characterize the trees $T$ for which $\gamma_I(T) =2\gamma(T)$.
After that, there are many studies on Italian domination of graphs in \cite{GWLY, GXY, LSX, SSSM, RC}.
Recently, the author of \cite{Vol} initiated the study of the Italian domination number in digraphs.
Related results was given in \cite{Kim, Vol-2}.
Our aim in this paper is to initiate the study of Italian bondage and reinforcement numbers for digraphs.

An \textit{Italian dominating function} (IDF) on a digraph $D$ with vertex set $V(D)$
is defined as a function $f : V(D) \rightarrow \{0, 1, 2\}$  such that every vertex $v \in V(D)$ with
$f(v) = 0$ has at least two in-neighbors assigned $1$ under $f$ or one in-neighbor $w$ with $f(w) = 2$.
An Italian dominating function $f : V(D) \rightarrow \{0, 1, 2\}$ gives an ordered partition $(V_0, V_1, V_2)$ (or $(V_0^f, V_1^f, V_2^f)$ to refer to $f$) of $V(D)$, where $V_i:= \{ x \in V(D) \mid f(x)=i \}$.
The \textit{weight} of an Italian dominating function $f$ is the value $\omega(f) = f(V(D)) = \sum_{u \in V(D)} f(u)$.
The \textit{Italian domination number} of a digraph $D$, denoted by $\gamma_I(D)$, is the minimum taken over the weights of all Italian dominating functions on $D$.
A \textit{$\gamma_I(D)$-function} is an Italian dominating function on $D$ with weight $\gamma_I(D)$.

The \textit{Italian bondage number} of a digraph $D$, denoted by $b_I(D)$, is the minimum number of arcs of $A(D)$ whose removal in $D$
results in a digraph $D'$ with $\gamma_I(D') > \gamma_I(D)$.

The \textit{Italian reinforcement number} of a digraph $D$, denoted by $r_I(D)$, is the minimum number of extra arcs whose addition to $D$
results in a digraph $D'$ with $\gamma_I(D') < \gamma_I(D)$.
The Italian reinforcement number of a digraph $D$ is defined to be $0$ if $\gamma_I(D) \leq 2$.
A subset $R$ of $A(\overline{D})$ is called an  \textit{Italian reinforcement set} (IRS) of $D$ if $\gamma_I(D +R) < \gamma_I(D)$.
An $r_I(D)$-set is an IRS of $D$ with size $r_I(D)$.

This paper is organized as follows.
In Section \ref{sec:main1}, we prepare basic results on the Italian domination number.
In Section \ref{sec:main2}, we give some bounds of the Italian bondage number and determine the exact values of Italian bondage numbers of some classes of digraphs.
In Section \ref{sec:main3}, we characterize all digraphs $D$ with $r_I(D)=1$.
We give some bounds of the Italian reinforcement number and also determine the exact values of Italian reinforcement numbers of compositions of digraphs.

\section{The Italian domination numbers}\label{sec:main1}

In this paper, we make use of the following results.

\begin{obs}\label{easy1}
For a digraph $D$, $\gamma_I(D) \leq n - \Delta^+(D) + 1$.
\end{obs}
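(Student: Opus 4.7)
The plan is to exhibit an explicit Italian dominating function whose weight meets the claimed bound, then read off the weight.

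First I would pick a vertex $v \in V(D)$ realizing the maximum out-degree, i.e.\ with $\deg_D^+(v) = \Delta^+(D)$, so that $|N_D^+(v)| = \Delta^+(D)$. I would then define $f : V(D) \to \{0,1,2\}$ by setting $f(v) = 2$, $f(u) = 0$ for every $u \in N_D^+(v)$, and $f(w) = 1$ for every remaining vertex $w \in V(D) \setminus N_D^+[v]$.

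Next I would verify that $f$ is an IDF. The vertices with $f$-value $0$ are exactly the out-neighbors of $v$, and for each such $u$, the arc $vu$ witnesses that $u$ has an in-neighbor (namely $v$) with $f(v) = 2$, which is precisely the condition required. Vertices assigned $1$ or $2$ impose no further requirement, so $f$ is a valid Italian dominating function.

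Finally I would compute the weight: the set $N_D^+(v)$ contributes $0$, the vertex $v$ contributes $2$, and the remaining $n - 1 - \Delta^+(D)$ vertices each contribute $1$, giving
\[
\omega(f) = 2 + (n - 1 - \Delta^+(D)) = n - \Delta^+(D) + 1.
\]
Since $\gamma_I(D) \leq \omega(f)$, the inequality follows. There is no real obstacle here; the only minor care needed is to note that $v \notin N_D^+(v)$ (the digraph is simple, so no loops), which ensures the three parts of the partition above are disjoint and the weight count is correct.
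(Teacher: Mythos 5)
Your construction is exactly the paper's proof: assign $2$ to a vertex of maximum out-degree, $0$ to its out-neighbors, and $1$ elsewhere, then compute the weight $n-\Delta^+(D)+1$. The argument is correct and matches the paper's approach, with your verification of the IDF condition and the weight count simply spelled out in more detail.
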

\begin{proof}
Let $D$ be a digraph, and let $v$ be a vertex with $deg_D^+(v) =\Delta^+(D)$.
Define a function $f : V(D) \rightarrow \{0,1,2\}$ by $f(v)=2$, $f(x)= 0$ if $x \in N^+(v)$, and $f(x)=1$ otherwise.
It is easy to see that $f$ is an IDF of $D$.
\end{proof}

The following result is the exact value of Italian domination number of a complete bipartite graph
(see \cite{CHHM} for the definition of Italian dominating function and domination number on a graph).

\begin{lem}[\cite{HSW}]\label{domination}
For a complete bipartite graph $K_{m,n}$ with $1 \leq m \leq n$ and $n \geq 2$,
\begin{equation*}\label{product}
\gamma_I(K_{m,n}) = \left\{
                      \begin{array}{ll}
                     2 & \hbox{if $m \leq 2$;} \\
                     3 & \hbox{if $m = 3$;} \\
                     4 & \hbox{if m $\geq 4$.}
                      \end{array}
                     \right.
\end{equation*}
\end{lem}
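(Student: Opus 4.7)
The plan is to prove matching upper and lower bounds in each of the three ranges $m \leq 2$, $m = 3$, and $m \geq 4$. Write $X = \{x_1,\dots,x_m\}$ and $Y = \{y_1,\dots,y_n\}$ for the parts of $K_{m,n}$, and recall that for an undirected graph the IDF condition at a $0$-vertex $v$ reduces to requiring either one $2$-neighbor or two $1$-neighbors of $v$.

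For the upper bounds I would exhibit explicit IDFs. When $m = 1$, setting $f(x_1) = 2$ and $f \equiv 0$ on $Y$ gives an IDF of weight $2$ because every $y_j$ is adjacent to the $2$-vertex $x_1$. When $m = 2$, the function $f(x_1) = f(x_2) = 1$ with $f \equiv 0$ on $Y$ has weight $2$ and every $y_j$ has two $1$-neighbors. When $m = 3$, take $f \equiv 1$ on $X$ and $f \equiv 0$ on $Y$, so that every $y_j$ has three $1$-neighbors; weight $3$. When $m \geq 4$, take $f(x_1) = f(y_1) = 2$ with $f \equiv 0$ elsewhere: every $0$-vertex in $X$ has $y_1$ as a $2$-neighbor, and every $0$-vertex in $Y$ has $x_1$ as a $2$-neighbor; weight $4$.

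For the lower bounds I would argue by contradiction, decomposing the weight as $a = f(X)$, $b = f(Y)$. The basic adjacency remark is that, because $K_{m,n}$ is bipartite, any $0$-vertex in $X$ must be dominated from within $Y$, and vice versa. The trivial bound $\gamma_I \geq 2$ handles $m \leq 2$. For $m = 3$, suppose $\omega(f) = 2$. If $\min\{a,b\} = 0$, then the weighted side (of size either $m = 3$ or $n \geq 3$) has at most two positive vertices and hence at least one $0$-vertex, whose neighbors all lie on the all-zero opposite side, a contradiction. If $a = b = 1$ instead, then the two $0$-vertices of $X$ see only the single $1$-neighbor of $Y$. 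For $m \geq 4$, suppose $\omega(f) = 3$. The same dichotomy forces $a, b \geq 1$, so $(a,b) \in \{(1,2),(2,1)\}$; the side of total weight $1$ contains only one positive vertex, of value $1$, and this provides neither two $1$-neighbors nor a $2$-neighbor to the $0$-vertices that must exist on the opposite side because $n \geq m \geq 4$.

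The main obstacle is the lower-bound bookkeeping: one must verify that $0$-vertices genuinely exist on the intended side in every subcase, using the hypotheses $n \geq m$ together with the bound on the size of the positive support, and then match them up with the adjacency remark. Once the decomposition $\omega(f) = a + b$ is set up, each subcase is dispatched in one line straight from the IDF definition.
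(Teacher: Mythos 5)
Your proposal is correct, and there is in fact no in-paper proof to compare it against: the paper simply quotes this result from \cite{HSW}, so your argument serves as a self-contained elementary verification rather than a variant of an existing one. The four explicit functions do give IDFs of weights $2,2,3,4$ in the respective ranges, and the lower-bound bookkeeping with $a=f(X)$, $b=f(Y)$ is sound: an all-zero side cannot be dominated when the weighted side has fewer positive vertices than its size (using $m=3$, resp.\ $n\geq m\geq 4$), and a side carrying total weight $1$ cannot supply weight $2$ in the neighborhood of the zero vertices forced to exist opposite it. One cosmetic point: for $m\geq 4$ you only exclude IDFs of weight exactly $3$, whereas $\gamma_I(K_{m,n})\geq 4$ requires excluding weight at most $3$; the case $\omega(f)=2$ is handled verbatim by the same dichotomy (if one side is all zero the other has at most two positive vertices but at least four vertices; if $a=b=1$ both sides contain zero vertices seeing total weight $1$), or by observing that the Italian condition $f(N(v))\geq 2$ is monotone under increasing $f$, so a weight-$2$ IDF could be raised to a weight-$3$ IDF and fed into your argument. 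This is a one-line addition, not a gap.
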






\begin{thm}[\cite{Vol}]\label{mainthm2}
Let $D$ be a digraph of order $n$. Then $\gamma_I(D) \geq \lceil \frac{2n}{2 + \Delta^+(D)} \rceil$.
\end{thm}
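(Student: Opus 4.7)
The plan is to take a $\gamma_I(D)$-function $f$ with ordered partition $(V_0,V_1,V_2)$ and run a double counting argument between the ``demand'' imposed by $V_0$ and the ``supply'' that $V_1\cup V_2$ can deliver along arcs, where supply is capped by the maximum out-degree $\Delta^+(D)$.

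First, I would reformulate the IDF condition quantitatively: for every $v\in V_0$, the requirement ``two in-neighbors in $V_1$ or one in-neighbor in $V_2$'' is equivalent to
\[
|N_D^-(v)\cap V_1|+2|N_D^-(v)\cap V_2|\geq 2.
\]
Summing this inequality over all $v\in V_0$ gives the demand bound $2|V_0|$ on the left, while the right-hand side, after swapping the order of summation, becomes
\[
\sum_{u\in V_1}|N_D^+(u)\cap V_0|+2\sum_{u\in V_2}|N_D^+(u)\cap V_0|.
\]
Bounding each $|N_D^+(u)\cap V_0|$ by $\Delta^+(D)$ yields
\[
2|V_0|\leq \Delta^+(D)\bigl(|V_1|+2|V_2|\bigr)=\Delta^+(D)\,\gamma_I(D).
\]

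To finish, I would replace $|V_0|=n-|V_1|-|V_2|$ and use the elementary inequality $|V_1|+|V_2|\leq |V_1|+2|V_2|=\gamma_I(D)$. This yields
\[
\bigl(\Delta^+(D)+2\bigr)\gamma_I(D)\geq 2n,
\]
and since $\gamma_I(D)$ is an integer, the ceiling version $\gamma_I(D)\geq \lceil 2n/(2+\Delta^+(D))\rceil$ follows immediately.

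The argument is short and has no real obstacle; the only point that requires care is the very first step, namely translating the disjunctive IDF condition (``two in-neighbors in $V_1$ or one in-neighbor in $V_2$'') into the single linear inequality $|N_D^-(v)\cap V_1|+2|N_D^-(v)\cap V_2|\geq 2$, so that a clean double counting becomes available. Everything else is arithmetic.
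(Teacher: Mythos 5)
Your argument is correct: the translation of the IDF condition at a vertex $v\in V_0$ into $|N_D^-(v)\cap V_1|+2|N_D^-(v)\cap V_2|\geq 2$ is valid, the double count gives $2|V_0|\leq \Delta^+(D)\,(|V_1|+2|V_2|)$, and combining with $|V_0|=n-|V_1|-|V_2|$ and $|V_1|+|V_2|\leq \gamma_I(D)$ yields $(\Delta^+(D)+2)\gamma_I(D)\geq 2n$ as claimed. Note that the paper itself gives no proof of this theorem (it is quoted from Volkmann's paper \cite{Vol}), so there is nothing internal to compare against; your supply--demand counting is the standard and essentially the expected argument for this bound.
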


\begin{thm}\label{mainthm2-5}
Let $D$ be a digraph of order $n \geq 3$. Then $\gamma_I(D) =2$ if and only if $\Delta^+(D)=n-1$ or
there exist two distinct vertices $u$ and $v$ such that $V(D) \setminus \{u, v\} \subseteq N_D^+(u)$ and $V(D) \setminus \{u, v\} \subseteq N_D^+(v)$.
\end{thm}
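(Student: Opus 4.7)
The plan is to prove both directions of the equivalence by carefully analyzing the possible shapes of a weight-$2$ IDF. First I would note that for $n \geq 3$ we always have $\gamma_I(D) \geq 2$: any IDF of weight $1$ would consist of a single vertex $u$ with $f(u)=1$ and all other vertices assigned $0$, but such a $0$-vertex would need either two in-neighbors of value $1$ or one in-neighbor of value $2$, neither of which is available. So the content of the theorem is to characterize exactly when $\gamma_I(D) \leq 2$ (equivalently $= 2$).

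For the forward direction, suppose $f = (V_0,V_1,V_2)$ is a $\gamma_I(D)$-function with $\omega(f) = 2$. Since the codomain is $\{0,1,2\}$ and the weight is $2$, there are only two possibilities: either $|V_2|=1$ and $V_1=\emptyset$, or $|V_1|=2$ and $V_2=\emptyset$. In the first case, the unique vertex $v \in V_2$ must be an in-neighbor of every other vertex (each such vertex has $f$-value $0$ and no value-$1$ in-neighbor is available), giving $\deg_D^+(v) = n-1$ and hence $\Delta^+(D) = n-1$. In the second case, write $V_1 = \{u,v\}$; every vertex $x \in V(D) \setminus \{u,v\}$ lies in $V_0$ and must have two in-neighbors of value $1$, forcing both $u$ and $v$ to be in-neighbors of $x$. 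This yields $V(D) \setminus \{u,v\} \subseteq N_D^+(u) \cap N_D^+(v)$, which is the second structural condition.

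For the backward direction, I would exhibit explicit IDFs of weight $2$ in each case. If $\Delta^+(D) = n-1$, pick $v$ with $\deg_D^+(v) = n-1$ and set $f(v) = 2$, $f(x) = 0$ for $x \neq v$; every $0$-vertex has $v$ as an in-neighbor of value $2$, so $f$ is an IDF of weight $2$. If vertices $u \neq v$ satisfy $V(D) \setminus \{u,v\} \subseteq N_D^+(u) \cap N_D^+(v)$, set $f(u) = f(v) = 1$ and $f(x) = 0$ otherwise; each $0$-vertex has both $u$ and $v$ as in-neighbors of value $1$, so again $f$ is an IDF of weight $2$. Combined with $\gamma_I(D) \geq 2$, this gives $\gamma_I(D) = 2$ in each case.

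There is no real obstacle here; the proof is essentially an enumeration of the two possible weight-$2$ configurations followed by an immediate unpacking of the IDF definition in each direction. The only subtle point to flag is the lower bound $\gamma_I(D) \geq 2$, which must be used to promote the inequality $\gamma_I(D) \leq 2$ produced by the explicit constructions into the required equality.
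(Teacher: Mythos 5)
Your proposal is correct and follows essentially the same route as the paper: the forward direction enumerates the two weight-$2$ configurations ($|V_2|=1$ or $|V_1|=2$) and unpacks the IDF definition, while the backward direction exhibits the obvious weight-$2$ functions. Your explicit justification of the lower bound $\gamma_I(D) \geq 2$ is a small but welcome addition to what the paper dismisses as ``easy to see.''
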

\begin{proof}
If $\Delta^+(D)=n-1$ or
there exist two distinct vertices $u$ and $v$ such that $V(D) \setminus \{u, v\} \subseteq N_D^+(u)$ and $V(D) \setminus \{u, v\} \subseteq N_D^+(v)$, then it is easy to see that $\gamma_I(D) =2$.

Assume that $\gamma_I(D) =2$.
Let $(V_0, V_1, V_2)$ be a $\gamma_I(D)$-function.
Then $\gamma_I(D) = 2 = |V_1| + 2|V_2|$ and $|V_2| \leq 1$.
If $|V_2|=1$, then $|V_1|=0$ and hence $\Delta^+(D)=n-1$.
If $|V_2|=0$, then $|V_1|=2$ and, by the definition of IDF,
there exist two distinct vertices $u$ and $v$ such that $V(D) \setminus \{u, v\} \subseteq N_D^+(u)$ and $V(D) \setminus \{u, v\} \subseteq N_D^+(v)$
\end{proof}

\begin{thm}[\cite{Vol}]\label{mainthm3}
Let $D$ be a digraph of order $n \geq 3$. Then $\gamma_I(D) < n$ if and only if $\Delta^+(D) \geq 2$ or $\Delta^-(D) \geq 2$.
\end{thm}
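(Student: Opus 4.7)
The proof naturally splits into the two implications of the biconditional. For the sufficiency, I would handle the two cases $\Delta^+(D) \geq 2$ and $\Delta^-(D) \geq 2$ separately. If $\Delta^+(D) \geq 2$, then Observation \ref{easy1} immediately gives $\gamma_I(D) \leq n - \Delta^+(D) + 1 \leq n - 1 < n$. If $\Delta^-(D) \geq 2$, pick a vertex $v$ with $|N_D^-(v)| \geq 2$ and define $f : V(D) \to \{0,1,2\}$ by $f(v)=0$ and $f(x)=1$ for every $x \neq v$. Since $v$ has at least two in-neighbors of value $1$ under $f$, this is an IDF of weight $n-1$, so $\gamma_I(D) \leq n-1 < n$.

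For the necessity, I would prove the contrapositive: if $\Delta^+(D) \leq 1$ and $\Delta^-(D) \leq 1$, then $\gamma_I(D) \geq n$. Let $f$ be a $\gamma_I(D)$-function with ordered partition $(V_0, V_1, V_2)$. For each $v \in V_0$, the IDF condition requires either two in-neighbors of weight $1$ or one in-neighbor of weight $2$; but since $\deg_D^-(v) \leq 1$, the first option is impossible, so $v$ has exactly one in-neighbor and that in-neighbor lies in $V_2$. This defines a map $\phi \colon V_0 \to V_2$ sending $v$ to its unique in-neighbor.

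The key combinatorial step is the double-counting argument: for any $w \in V_2$, we have $\deg_D^+(w) \leq 1$, so $w$ has at most one out-neighbor, and in particular $|\phi^{-1}(w)| \leq 1$. Summing over $V_2$ yields $|V_0| \leq |V_2|$. Consequently
\[
\omega(f) = |V_1| + 2|V_2| = (n - |V_0| - |V_2|) + 2|V_2| = n + |V_2| - |V_0| \geq n,
\]
so $\gamma_I(D) \geq n$, completing the contrapositive.

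I do not anticipate a serious obstacle here, since Observation \ref{easy1} handles the $\Delta^+$ case for free and the symmetric $\Delta^-$ case admits a direct construction. The only nontrivial ingredient is the combinatorial step in the necessity direction, which uses both degree hypotheses simultaneously, each in a different role: $\Delta^-(D) \leq 1$ forces the uniqueness of the dominating in-neighbor for each $v \in V_0$, while $\Delta^+(D) \leq 1$ forces each $w \in V_2$ to dominate at most one such vertex, yielding the crucial inequality $|V_0| \leq |V_2|$.
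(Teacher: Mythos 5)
Your proof is correct: the sufficiency cases are handled properly (Observation \ref{easy1} for $\Delta^+(D)\geq 2$, and your explicit weight-$(n-1)$ function for $\Delta^-(D)\geq 2$), and in the contrapositive direction the map $\phi\colon V_0\to V_2$ is well defined because $\deg_D^-(v)\leq 1$ rules out the two-in-neighbors option, and injective because $\deg_D^+(w)\leq 1$, giving $|V_0|\leq |V_2|$ and hence $\omega(f)=n+|V_2|-|V_0|\geq n$. Note that the paper itself gives no proof of this statement, since it is quoted from Volkmann's paper \cite{Vol}, so there is nothing in the text to compare against; your argument is a complete, self-contained proof along the lines one would expect.
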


\begin{cor}\label{mainthm3-cor}
If $D$ is a directed path or cycle of order $n$, then $\gamma_I(D) = n$.
\end{cor}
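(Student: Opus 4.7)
The plan is to derive the corollary directly from Theorem \ref{mainthm3} together with the trivial upper bound. A directed path or cycle is a digraph in which every vertex has in-degree at most $1$ and out-degree at most $1$, so $\Delta^+(D)\leq 1$ and $\Delta^-(D)\leq 1$. Hence neither of the conditions $\Delta^+(D)\geq 2$, $\Delta^-(D)\geq 2$ can hold.

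Assuming $n\geq 3$, Theorem \ref{mainthm3} then yields $\gamma_I(D)\geq n$. For the reverse inequality, the constant function $f\equiv 1$ is trivially an IDF (the defining condition is vacuous, because no vertex is assigned $0$), and its weight is $n$, giving $\gamma_I(D)\leq n$. Combining the two bounds gives $\gamma_I(D)=n$.

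For the small cases $n\in\{1,2\}$ not covered by Theorem \ref{mainthm3}, I would dispose of them by direct inspection: a vertex with no in-neighbor must receive value at least $1$, so a directed path of order $1$ or $2$ has weight at least $n$; for the directed $2$-cycle, if some vertex has $f$-value $0$, its unique in-neighbor must have value $2$, again forcing total weight $\geq 2=n$. In all cases, the trivial IDF $f\equiv 1$ again provides the matching upper bound.

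There is no substantial obstacle here; the only thing to note is that the statement fits exactly into the contrapositive of Theorem \ref{mainthm3}, so essentially all the work is already done upstream, and the proof reduces to observing the degree bound $\Delta^\pm(D)\leq 1$ for paths and cycles plus the universal upper bound $\gamma_I(D)\leq n$.
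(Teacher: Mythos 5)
Your proof is correct and follows exactly the route the paper intends: the corollary is stated as an immediate consequence of Theorem \ref{mainthm3} (via the degree bound $\Delta^{+}(D)\leq 1$, $\Delta^{-}(D)\leq 1$ for directed paths and cycles, combined with the trivial upper bound $\gamma_I(D)\leq n$ from the constant function $f\equiv 1$). Your extra treatment of the cases $n\in\{1,2\}$, which Theorem \ref{mainthm3} does not cover, is a harmless and correct addition.
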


\section{The Italian bondage numbers}\label{sec:main2}

\subsection{Bounds of the Italian bondage numbers}\label{sec:main2-1}

The \textit{underlying graph} $G[D]$ of a digraph $D$ is the graph obtained by replacing each arc $uv$ by an edge $uv$.
Note that $G[D]$ has two parallel edges $uv$ when $D$ contains the arc $uv$ and $vu$.
A digraph $D$ is \textit{connected} if the underlying graph $G[D]$ is connected.
For a graph $G$, we denote the degree of $v \in V(G)$ by $deg_G(v)$.
In particular, $\Delta(G)$ means the maximum degree in $G$.

\begin{thm}\label{mainthm9}
If $D$ is a digraph, and $xyz$ a path of length $2$ in $G[D]$ such that $yx, yz \in A(D)$,
then
\[b_I(D) \leq deg_{G[D]}(x) + deg_{D}^-(y) + deg_{G[D]}(z) - |N^-(x) \cap N^-(y) \cap N^-(z)|.\]
Moreover, if $x$ and $z$ are adjacent in $G[D]$, then
\[b_I(D) \leq deg_{G[D]}(x) + deg_{D}^-(y) + deg_{G[D]}(z) -1 - |N^-(x) \cap N^-(y) \cap N^-(z)|.\]
\end{thm}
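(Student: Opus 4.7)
The plan is to exhibit an explicit set $R \subseteq A(D)$ of arcs whose cardinality is bounded above by the claimed quantity, and then show $\gamma_I(D - R) > \gamma_I(D)$ by modifying a $\gamma_I(D-R)$-function into a strictly lighter IDF of $D$.

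I will take $R$ to consist of all arcs of $D$ incident with $x$ in $G[D]$, all arcs incident with $z$, together with all in-arcs of $y$ whose tail does not lie in $B := N^-(x) \cap N^-(y) \cap N^-(z)$. A direct inclusion--exclusion count will show that $|R|$ equals $deg_{G[D]}(x) + deg_D^-(y) + deg_{G[D]}(z) - |B|$ minus the number of arcs among $xy, zy, xz, zx$ that actually lie in $A(D)$; in particular $|R|$ is at most the stated bound. The bonus $-1$ for the moreover-part is then immediate, since $x$ being adjacent to $z$ in $G[D]$ forces at least one of $xz, zx$ to belong to $A(D)$.

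To verify $\gamma_I(D - R) > \gamma_I(D)$, let $f$ be a $\gamma_I(D-R)$-function. Because $x$ and $z$ are isolated in $D - R$, one has $f(x), f(z) \ge 1$, and the only in-neighbors of $y$ in $D-R$ lie in $B$. I split on $f(y)$. If $f(y) \ge 1$, I define $g$ by $g(y) = 2$, $g(x) = g(z) = 0$, and $g = f$ elsewhere; the arcs $yx, yz \in A(D)$ dominate $x$ and $z$, and any other vertex $v$ with $g(v) = 0$ keeps its $D-R$-dominators, which lie outside $\{x, z\}$ and whose $g$-values equal, or at $y$ exceed, their $f$-values. This gives $\omega(g) = \omega(f) + 2 - f(x) - f(y) - f(z) \le \omega(f) - 1$.

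If instead $f(y) = 0$, then $y$ must be dominated in $D - R$ by vertices of $B$, so either some $v \in B$ has $f(v) = 2$ or two distinct vertices of $B$ carry value $1$ under $f$. Setting $g(x) = g(z) = 0$ and $g = f$ on the remaining vertices produces an IDF of $D$: the same vertices of $B$ that dominated $y$ in $D-R$ now dominate $x$ and $z$ in $D$ because $B \subseteq N^-(x) \cap N^-(z)$ by definition, while every vertex outside $\{x, y, z\}$ keeps the same dominators it had under $f$. This yields $\omega(g) = \omega(f) - f(x) - f(z) \le \omega(f) - 2$. In both cases $\gamma_I(D) \le \omega(g) < \omega(f) = \gamma_I(D-R)$, which finishes the argument. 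The only subtle point I foresee is precisely this second case: one has to notice that the $B$-vertices, whose arcs to $x$ and $z$ were severed when passing to $D-R$, are nevertheless available to dominate $x$ and $z$ in the original digraph $D$.
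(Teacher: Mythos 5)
Your proposal is correct and follows essentially the same route as the paper: you remove the same arc set (all arcs incident with $x$ or $z$ plus the in-arcs of $y$ not coming from $N^-(x)\cap N^-(y)\cap N^-(z)$), count it the same way, and then convert a $\gamma_I(D-R)$-function into a lighter IDF of $D$ by the same case analysis on $f(y)$, merely merging the paper's cases $f(y)=1$ and $f(y)=2$ into one.
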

\begin{proof}
Let $B$ be the set of all arcs incident with $x$ or $z$ and all arcs terminating at $y$ with the exception of all arcs
from $N^-(x) \cap N^-(z)$ to $y$. Then
\[ |B| \leq deg_{G[D]}(x) + deg_{D}^-(y) + deg_{G[D]}(z) - |N^-(x) \cap N^-(y) \cap N^-(z)| \]
and
\[ |B| \leq deg_{G[D]}(x) + deg_{D}^-(y) + deg_{G[D]}(z) -1 - |N^-(x) \cap N^-(y) \cap N^-(z)| \]
when $x$ and $z$ are adjacent.

Let $D' = D - B$.
In $D'$, $x$ and $z$ are isolated, and all in-neighbors of $y$ in $D'$, if any, lie in $N^-(x) \cap N^-(z)$.
Let $f=(V_0, V_1, V_2)$ be a $\gamma_I(D')$-function.
Then $f(x)=f(z)=1$.
If $f(y)=2$, then
\[(V_0 \cup \{x,z\}, V_1 \setminus \{x,z\}, V_2)\]
 is an IDF of $D$ with weight less than $\omega(f)$.
If $f(y)=1$, then
\[(V_0 \cup \{x,z\}, V_1 \setminus \{x,y,z\}, V_2 \cup \{y\})\]
is an IDF of $D$ with weight less than $\omega(f)$.
However, if $f(y)=0$, then there exists $w \in N^-(x) \cap N^-(y) \cap N^-(z)$ such that $f(w)=2$ or
there exist $w_1, w_2 \in N^-(x) \cap N^-(y) \cap N^-(z)$ such that $f(w_1)=f(w_2)=1$.
Since $w, w_1$ and $w_2$ are in-neighbors of $x$ and $z$ in $D$,
\[(V_0 \cup \{x,z\}, V_1 \setminus \{x,z\}, V_2)\]
is an IDF of $D$ with weight less than $\omega(f)$.
This completes the proof.
\end{proof}

\begin{thm}\label{mainthm10}
Let $D$ be a digraph of order $n \geq 3$.
If $G[D]$ is connected, then
\[b_I(D) \leq (\gamma_I(D)-1)\Delta(G[D]).\]
\end{thm}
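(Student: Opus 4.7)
The plan is to construct an explicit bondage set. Fix any subset $T\subseteq V(D)$ with $|T|=\gamma_I(D)-1$ and let $B$ be the set of all arcs of $D$ incident (as in- or out-arcs) to some vertex of $T$. Since each vertex of $T$ contributes at most $\Delta(G[D])$ incident arcs in the underlying graph $G[D]$, one has $|B|\leq(\gamma_I(D)-1)\Delta(G[D])$. It then suffices to show $\gamma_I(D-B)>\gamma_I(D)$.

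Let $g$ be an arbitrary IDF of $D':=D-B$. Every vertex of $T$ is isolated in $D'$, which forces $g(v)\geq 1$ for each $v\in T$. Moreover, because every arc between $T$ and $V(D)\setminus T$ lies in $B$, any vertex $u\in V(D)\setminus T$ with $g(u)=0$ is dominated in $D'$ entirely by in-neighbors within $V(D)\setminus T$; hence the restriction of $g$ to $V(D)\setminus T$ is an IDF of the induced subdigraph $D[V(D)\setminus T]$. Summing the two contributions yields $\omega(g)\geq |T|+\gamma_I(D[V(D)\setminus T])=(\gamma_I(D)-1)+\gamma_I(D[V(D)\setminus T])$.

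The key remaining step is to show $\gamma_I(D[V(D)\setminus T])\geq 2$. Since $|V(D)\setminus T|=n-\gamma_I(D)+1$, the residual subdigraph has at least two vertices whenever $\gamma_I(D)\leq n-1$, and every digraph on two or more vertices satisfies $\gamma_I\geq 2$: a putative weight-one IDF would place a single value $1$, leaving every other vertex with no in-neighbor of value $2$ and fewer than two in-neighbors of value $1$, contradicting the definition of an IDF. Combining gives $\gamma_I(D')\geq\gamma_I(D)+1$, as required. The single obstacle is the degenerate case $\gamma_I(D)=n$, in which $V(D)\setminus T$ is a single vertex and $\gamma_I(D[V(D)\setminus T])=1$, so the strict increase fails; but $\gamma_I\leq n$ for every digraph, so $b_I(D)$ is undefined in that case, and the theorem is to be read with the implicit assumption $\gamma_I(D)\leq n-1$ (equivalently, by Theorem~\ref{mainthm3}, $\Delta^+(D)\geq 2$ or $\Delta^-(D)\geq 2$).
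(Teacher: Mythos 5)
Your proof is correct and takes a genuinely different route from the paper's. The paper proceeds by induction on $\gamma_I(D)$: the base case $\gamma_I(D)=2$ isolates one vertex of positive weight, and the inductive step assumes $b_I(D)>(\gamma_I(D)-1)\Delta(G[D])$, deduces that isolating an arbitrary vertex $u$ forces $\gamma_I(D-u)=\gamma_I(D)-1$, and then uses the subadditivity $b_I(D)\leq b_I(D-u)+\deg_{G[D]}(u)$ to reach a contradiction. You instead give a one-shot construction: isolate an arbitrary set $T$ of $\gamma_I(D)-1$ vertices, note that any IDF of the resulting digraph must pay $1$ on each vertex of $T$ and restricts to an IDF of the untouched induced subdigraph $D[V(D)\setminus T]$, and invoke the elementary fact that $\gamma_I\geq 2$ on any digraph with at least two vertices. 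This is shorter, avoids the bookkeeping the paper's induction silently needs (the induction hypothesis is applied to $D-u$, whose underlying graph need not be connected and for which $b_I(D-u)$ need not even exist), and in fact never uses connectedness of $G[D]$, so it proves a slightly more general statement. The one delicate point, $\gamma_I(D)=n$, you identify and handle correctly: there no arc removal can raise $\gamma_I$ above $n$ (the all-ones function is an IDF of every spanning subdigraph), so $b_I(D)$ is not defined and the theorem must be read with the implicit hypothesis $\gamma_I(D)\leq n-1$; the paper's own proof does not address this case either (a directed cycle satisfies all stated hypotheses), so your explicit treatment is, if anything, more careful.
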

\begin{proof}
We proceed by induction on $\gamma_I(D)$.
Assume that $\gamma_I(D)=2$.
For a vertex $u \in V_1 \cup V_2$, let $B_u$ be the set of arcs incident with $u$.
Since $\gamma_I(D - u) \geq 2$ by $n \geq 3$,
we have \[\gamma_I(D -B_u) = \gamma_I(D - u) +1 \geq 3.\]
This implies that $b_I(D) \leq |B_u|$ for $u \in V_1 \cup V_2$. Thus, $b_I(D) \leq \Delta(G[D])$.

Assume that the result is true for every digraph with the Italian domination number $k \geq 3$.
Let $D$ be a digraph with $\gamma_I(D)=k+1$.
Suppose to the contrary that $b_I(D) > (\gamma_I(D)-1)\Delta(G[D])$.
Let $u$ be an arbitrary vertex of $D$, and let $B_u$ be the set of arcs incident with $u$.
Then we have $\gamma_I(D) = \gamma_I(D - B_u)$.
Let $f$ be a $\gamma_I(D - B_u)$-function.
Then $f(u)=1$ and the function $f$ restricted to $D - u$ is also a $\gamma_I(D - u)$-function.
This implies that $\gamma_I(D - u) = \gamma_I(D) -1$.
So, $b_I(D) \leq b_I(D -u) + deg_{G[D]}(u)$.
By the induction hypothesis, we have
\begin{eqnarray*}
b_I(D)  & \leq &  b_I(D -u) + deg_{G[D]}(u) \\
              & \leq & (\gamma_I(D - u)-1)\Delta(G[D -u]) + deg_{G[D]}(u)  \\
              & \leq & (\gamma_I(D - u)-1)\Delta(G[D]) + \Delta(G[D]) \\
              & = & \gamma_I(D - u)\Delta(G[D]) \\
              & = & (\gamma_I(D) -1)\Delta(G[D]).
\end{eqnarray*}
This is a contradiction.
\end{proof}

\subsection{The Italian bondage numbers of some classes of digraphs}\label{sec:main2-2}

For a graph $G$, the \textit{associated digraph} $G^*$ is the digraph obtained from $G$ by replacing each edge of $G$ by two oppositely oriented arcs. Note that $\gamma_I(G) = \gamma_I(G^*)$ for any graph $G$.

\begin{thm}\label{mainthm11}
Let $K_n^*$ be the complete digraph of order $n \geq 3$.
Then $b_I(K_n^*)= n$.
\end{thm}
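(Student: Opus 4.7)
The plan is to establish $b_I(K_n^*)=n$ by proving the two matching inequalities, using Theorem \ref{mainthm2-5} as the main tool. First observe that, since $\Delta^+(K_n^*)=n-1$, Theorem \ref{mainthm2-5} (or Observation \ref{easy1}) gives $\gamma_I(K_n^*)=2$; so the task is to show that the minimum number of arcs whose deletion raises the Italian domination number to at least $3$ equals $n$.

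For the lower bound $b_I(K_n^*)\geq n$, I would use a simple pigeonhole/counting argument on out-degrees. If $B\subseteq A(K_n^*)$ with $|B|\leq n-1$, then the total decrease of out-degrees $\sum_{v}(\deg^+_{K_n^*}(v)-\deg^+_{K_n^*-B}(v))=|B|\leq n-1$, so by pigeonhole at least one vertex $v$ still satisfies $\deg^+_{K_n^*-B}(v)=n-1$, i.e.\ $\Delta^+(K_n^*-B)=n-1$. By Theorem \ref{mainthm2-5}, this forces $\gamma_I(K_n^*-B)\leq 2$, and since $n\geq 3$ we trivially have $\gamma_I\geq 2$, so $\gamma_I(K_n^*-B)=2=\gamma_I(K_n^*)$. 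Hence no set of $n-1$ arcs is a bondage set.

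For the upper bound $b_I(K_n^*)\leq n$, I would exhibit an explicit set of $n$ arcs whose removal raises $\gamma_I$. Label $V(K_n^*)=\{v_1,\ldots,v_n\}$ and let $B=\{v_iv_{i+1}:1\leq i\leq n\}$ (indices mod $n$), a directed Hamiltonian cycle. Write $D'=K_n^*-B$. Then $N_{D'}^+(v_i)=V\setminus\{v_i,v_{i+1}\}$ for every $i$. To show $\gamma_I(D')\geq 3$ via Theorem \ref{mainthm2-5}, I check that neither of the two sufficient conditions for $\gamma_I=2$ holds: (i) no vertex has out-degree $n-1$, as every $v_i$ is missing the arc to $v_{i+1}$; and (ii) for any pair $v_i,v_j$, I must produce $w\notin\{v_i,v_j\}$ with $w\notin N_{D'}^+(v_i)\cap N_{D'}^+(v_j)$. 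If $v_{i+1}\neq v_j$, take $w=v_{i+1}$; if $v_{i+1}=v_j$, then because $n\geq 3$ the vertex $w=v_{j+1}=v_{i+2}$ lies outside $\{v_i,v_j\}$ and is missing from $N_{D'}^+(v_j)$. Either way the pair fails, so by Theorem \ref{mainthm2-5} we get $\gamma_I(D')\geq 3>\gamma_I(K_n^*)$, whence $b_I(K_n^*)\leq|B|=n$.

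The main (though mild) obstacle is the case analysis in step (ii) of the upper bound, specifically handling the ``adjacent'' pair $v_i,v_{i+1}$ where the natural candidate $w=v_{i+1}$ is excluded; the hypothesis $n\geq 3$ is precisely what rescues the argument by providing a third vertex $v_{i+2}$. Everything else reduces to the pigeonhole observation on out-degrees together with a direct application of Theorem \ref{mainthm2-5}.
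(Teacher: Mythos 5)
Your proposal is correct and follows essentially the same route as the paper: the lower bound via counting out-degree losses (so some vertex retains out-degree $n-1$ and $\gamma_I$ stays $2$), and the upper bound by deleting the $n$ arcs of a directed Hamiltonian cycle and invoking Theorem \ref{mainthm2-5}. Your write-up is in fact slightly more complete, since you explicitly verify both failure conditions of Theorem \ref{mainthm2-5} (including the adjacent pair $v_i,v_{i+1}$), which the paper leaves as an observation.
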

\begin{proof}
Note that $\gamma_I(K_n^*)=2$.
Let $B$ be an arc set of $K_n^*$. Define $D:= K_n^* - B$.
If $D$ contain a vertex $x$ such that $deg_D^+(x) = n-1$, then it follows from Observation \ref{easy1} that $\gamma_I(D)=2$.
This implies that $b_I(K_n^*) \geq n$.

Let $\{x_1,x_1, \dotsc, x_n\}$ be the vertex set of $K_n^*$, and let $B:=\{x_1x_2, x_2x_3, \dotsc, x_nx_1\}$ be the arc set of a directed
cycle in $K_n^*$.
Define $D:= K_n^* - B$. Then one can observe that there do not exist two distinct vertices $u$ and $v$ in $D$ such that
$V(D) \setminus \{u, v\} \subseteq N_D^+(u)$ and $V(D) \setminus \{u, v\} \subseteq N_D^+(v)$.
It follows from Theorem \ref{mainthm2-5} that $\gamma_I(D) \geq 3$.
This completes the proof.
\end{proof}

The following result follows from the definition of associated digraph and Lemma \ref{domination}.
For a complete bipartite digraph $K_{m,n}^*$ with $1 \leq m \leq n$,
\begin{equation}\label{com-bi}
\gamma_I(K_{m,n}^*) = \left\{
                      \begin{array}{ll}
                     2 & \hbox{if $m \leq 2$;} \\
                     3 & \hbox{if $m = 3$;} \\
                     4 & \hbox{if m $\geq 4$.}
                      \end{array}
                     \right.
\end{equation}

\begin{thm}\label{mainthm12}
Let $K_{m,n}^*$ be the complete bipartite digraph such that $1 \leq m < n$.
Then
\begin{equation*}\label{com-mul}
b_I(K_{m,n}^*) = \left\{
                      \begin{array}{ll}
                     1 & \hbox{if $m \leq 2$;} \\
                     2 & \hbox{if $m=3$;} \\
                     m+2 & \hbox{if $m \geq 4$.}
                      \end{array}
                     \right.
\end{equation*}
\end{thm}
\begin{proof}
We denote $K_{m,n}^*$ by $D$.
Let $X =\{x_1, x_2, \dotsc, x_m\}$ and $Y= \{y_1, y_2, \dotsc, y_n\}$ be the partite sets of $D$.
The result is clear for $m \leq 2$.

Assume that $m=3$. It follows from (\ref{com-bi}) that $\gamma_I(D)=3$.
If we remove two arcs terminating at some vertex $y_j \in Y$,
then the Italian domination number of resulting digraph increases. So, $b_I(D) \leq 2$.
For any arc $e$ of $A(D)$, there exist two vertices $x_i$ and $x_j$ such that $N_{D-e}^+(x_i)=n$ and $N_{D-e}^+(x_j)=n$.
Thus, we have $b_I(D)=2$.

Assume that $m \geq 4$. It follows from (\ref{com-bi}) that $\gamma_I(D)=4$.
Let $B= \{ x_iy_1 \mid 1 \leq i \leq m \} \cup \{y_1x_1, y_1x_2 \}$.
It is easy to see that $\gamma_I(D -B) \geq 5$. So, $b_I(D) \leq m+2$.

Next, we show that $b_I(D) \geq m+2$.
Let $B'$ be a subset of $A(D)$ such that $|B'| = m+1$, and let $D' = D - B'$.
Then $D'$ has at least $n-1$ vertices whose outdegree are equal in $D$ and $D'$.
Let $E = \{ v \in V(D) \mid d_D^+(v) = d_{D'}^+(v) \}$.
If $E \cap X \neq \emptyset \neq E \cap Y$, then clearly $\gamma_I(D')=4$.
Henceforth, we assume that $E \cap X = \emptyset$ or $E \cap Y = \emptyset$.
Without loss of generality, assume that $E \cap X = \emptyset$.
Then $E \subseteq Y$ and $B'$ contains one outgoing arc for each $x_i \in X$.
Since $|B'|=m+1 < 2m$, $B'$ contains exactly one outgoing arc for some $x_i \in X$.
Without loss of generality, assume that $i=1$ and $x_1y_1 \in B'$.
If $E=Y$, then
\[(V(D') \setminus \{x_1,y_1\}, \emptyset, \{x_1,y_1\})\]
is an IDF of $D'$ with weight $4$.
Let $E \subset Y$. We may assume that $E \subseteq \{y_1, y_2, \dotsc, y_{n-1}\}$.
Thus, $B'$ contains one outgoing arc from $y_n$, say $y_nx_m$. Since $|B'|=m+1$, $B'$ contains exactly one outgoing arc for each $x_i \in X$
and one outgoing arc from $y_n$.
If $x_iy_j \in B'$ for some $1 \leq i \leq m$ and $j < n$, then
\[(V(D') \setminus \{x_i, y_j\}, \emptyset, \{x_i, y_j\})\]
is an IDF of $D'$ with weight $4$. Thus, we assume that $x_iy_n \in B'$ for each $1 \leq i \leq m$.
But,
\[(V(D') \setminus \{x_m, y_n\}, \emptyset, \{x_m, y_n\})\]
is an IDF of $D'$ with weight $4$.
Thus, we have $b_I(D) \geq m+2$.
\end{proof}

\section{The Italian reinforcement numbers}\label{sec:main3}

\subsection{Digraphs with $r_I(D)=1$}\label{sec:main3-1}

\begin{lem}\label{lem:b1}
Let $D$ be a digraph with $\gamma_I(D) \geq 3$.
Let $F$ be an $r_I(D)$-set, and let $g$ be a $\gamma_I(D)$-function of $D + F$.
Then the following hold:
\begin{enumerate}
\item For each arc $v_1v_2 \in F$, $g(v_1) \neq 0$ and $g(v_2) = 0$ .
\item $\gamma_I(D+F) = \gamma_I(D)- 1$.
\end{enumerate}
\end{lem}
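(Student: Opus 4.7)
The plan is to prove both parts by contradiction, exploiting the minimality of $F$ among Italian reinforcement sets of $D$.

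For part (i), fix an arc $v_1v_2 \in F$. I claim that if either $g(v_1) = 0$ or $g(v_2) \neq 0$, then the arc $v_1v_2$ is redundant in making $g$ an IDF of $D+F$: in the first case $v_1$ has weight $0$, so it cannot contribute to dominating $v_2$; in the second case $v_2$ has positive weight and requires no incoming domination at all. In either situation, $g$ is still an IDF of the digraph $D + (F \setminus \{v_1v_2\})$, so $\gamma_I(D + (F \setminus \{v_1v_2\})) \leq \omega(g) = \gamma_I(D+F) < \gamma_I(D)$. This would make $F \setminus \{v_1v_2\}$ an IRS of strictly smaller size than $F$, contradicting $|F| = r_I(D)$.

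For part (ii), the definition of an IRS immediately gives $\gamma_I(D+F) \leq \gamma_I(D) - 1$, so it suffices to rule out $\gamma_I(D+F) \leq \gamma_I(D) - 2$. Assume the latter for contradiction and pick any $v_1v_2 \in F$; by part (i) we know $g(v_2) = 0$. Define $g'$ by $g'(v_2) = 1$ and $g'(u) = g(u)$ for all $u \neq v_2$. Deletion of the single arc $v_1v_2$ can affect the IDF property only at $v_2$, and after the modification $v_2$ is a weight-$1$ vertex requiring no in-neighbor domination; hence $g'$ is an IDF of $D + (F \setminus \{v_1v_2\})$ with $\omega(g') = \omega(g) + 1 \leq \gamma_I(D) - 1 < \gamma_I(D)$. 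So once again $F \setminus \{v_1v_2\}$ is an IRS smaller than $F$, contradicting minimality.

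The whole argument is essentially bookkeeping with the IDF property, and I do not anticipate a serious obstacle. The one point that needs care is to verify that the modified function $g'$ really is an IDF of $D + (F \setminus \{v_1v_2\})$, which is immediate because raising $v_2$'s value to $1$ dispenses with any in-domination requirement at $v_2$, while every other vertex retains in $D + (F \setminus \{v_1v_2\})$ exactly the same set of in-neighbors it had in $D+F$ under $g$.
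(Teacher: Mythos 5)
Your proposal is correct and follows essentially the same argument as the paper: in both parts one deletes a single arc $v_1v_2$ from $F$ and exhibits an IDF of $D+(F\setminus\{v_1v_2\})$ of small enough weight (in (ii) by raising $g(v_2)$ from $0$ to $1$), contradicting the minimality of the $r_I(D)$-set $F$. If anything, your case analysis in (i) is slightly cleaner, since it explicitly covers the full negation $g(v_1)=0$ or $g(v_2)\neq 0$, whereas the paper only lists two of the three subcases.
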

\begin{proof}
If there exists an arc $v_1v_2 \in F$ such that either $g(v_i)\geq 1$ for each $i \in \{1,2\}$ or $g(v_1)=g(v_2)=0$, then
$g$ is also an IDF of $D +(F \setminus \{v_1v_2\})$, and hence $F \setminus \{v_1v_2\}$ is an IRS of $D$, which contradicts the definition of $F$.
Thus, (i) holds.

By the definition of $F$, we have $\gamma_I(D +F) \leq \gamma_I(D)- 1$.
Suppose that $\gamma_I(D +F) \leq \gamma_I(D)- 2$.
Let $v_1v_2 \in F$. By (i), $g(v_1)\neq 0$ and $g(v_2)=0$.
Then the function $g' : V(D + (F \setminus \{v_1v_2\})) \rightarrow \{0,1,2\}$ with
\begin{equation*}
g'(x) = \left\{
                      \begin{array}{ll}
                     1 & \hbox{if $x=v_2$;} \\
                     g(x) & \hbox{otherwise}
                      \end{array}
                     \right.
\end{equation*}
is an IDF of $D + (F \setminus \{v_1v_2\})$ such that $\omega(g')=\omega(g)+1 \leq \gamma_I(D)- 1$.
This implies that $F \setminus \{v_1v_2\}$ is an IRS of $D$, which contradicts the definition of $F$.
Thus, (ii) holds.
\end{proof}

\begin{lem}\label{irs=1}
Let $D$ be a digraph of order $n \geq 3$, $\Delta^+(D) \geq 1$ and $\gamma_I(D) = n$.
Then $r_I(D) = 1$.
\end{lem}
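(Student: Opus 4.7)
The plan is to apply Theorem \ref{mainthm3} twice: once to extract information about the structure of $D$ from $\gamma_I(D) = n$, and once to certify that adding a single arc drops the Italian domination number below $n$.

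First, since $\gamma_I(D) = n$ and $n \geq 3$, the contrapositive of Theorem \ref{mainthm3} yields $\Delta^+(D) \leq 1$ and $\Delta^-(D) \leq 1$. Combined with the hypothesis $\Delta^+(D) \geq 1$, this forces $\Delta^+(D) = 1$; so I can pick a vertex $u \in V(D)$ with $\deg_D^+(u) = 1$, and let $v$ be its unique out-neighbor. Because $n \geq 3$, there is at least one vertex $w \in V(D) \setminus \{u,v\}$. Since $u$'s only out-neighbor in $D$ is $v$, the arc $uw$ does not belong to $A(D)$, so $F := \{uw\}$ is a legitimate candidate for an Italian reinforcement set.

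Now set $D' := D + F$. Then $\deg_{D'}^+(u) = 2$, so $\Delta^+(D') \geq 2$. Applying Theorem \ref{mainthm3} to $D'$ (which also has order $n \geq 3$) gives $\gamma_I(D') < n = \gamma_I(D)$. Hence $F$ is an IRS of $D$, proving $r_I(D) \leq 1$. The reverse inequality $r_I(D) \geq 1$ is immediate from the definition, because an IRS must contain at least one arc in order to strictly decrease $\gamma_I$. Combining the two bounds yields $r_I(D) = 1$.

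There is no real obstacle here: the proof is essentially a one-line application of Theorem \ref{mainthm3} in both directions, and the only care needed is in verifying that the single added arc $uw$ is not already present in $D$, which follows from the out-degree bound $\Delta^+(D) \leq 1$. The hypothesis $n \geq 3$ is used both to guarantee the existence of a third vertex $w$ and to enable the use of Theorem \ref{mainthm3} on both $D$ and $D'$.
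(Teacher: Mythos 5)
Your proof is correct, and it follows the paper's opening step exactly: use Theorem \ref{mainthm3} in the contrapositive to deduce $\Delta^+(D)\leq 1$, hence $\Delta^+(D)=1$, pick a vertex $u$ of out-degree $1$ with out-neighbor $v$, and add the single missing arc $uw$ to a third vertex $w$ (the check $uw\notin A(D)$ is exactly as in the paper). Where you diverge is in certifying the drop: the paper also derives $\Delta^-(D)=1$ from the degree-sum identity, notes that $D$ is a disjoint union of directed paths, cycles and isolated vertices, and then exhibits the explicit IDF $(\{v,w\},\,V(D)\setminus\{u,v,w\},\,\{u\})$ of $D+uw$ of weight $n-1$; you instead observe that $\deg^+_{D+uw}(u)=2$ and invoke the ``if'' direction of Theorem \ref{mainthm3} on $D+uw$ to get $\gamma_I(D+uw)<n$. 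Your route is leaner -- it never needs $\Delta^-(D)$, the structural description of $D$, or an explicit witness -- while the paper's construction is self-contained at that point (it is in effect the witness the easy direction of Theorem \ref{mainthm3} would produce) and displays concretely that the Italian domination number falls to $n-1$. Both arguments are complete, including the trivial lower bound $r_I(D)\geq 1$, which is legitimate here since $\gamma_I(D)=n\geq 3$ rules out the convention $r_I(D)=0$.
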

\begin{proof}
It follows from Theorem \ref{mainthm3} that $\Delta^+(D) =1$.
Since $\sum_{v \in V(D)} deg^+(v) =  \sum_{v \in V(D)} deg^-(v)$, we have $\Delta^-(D) \geq 1$.
It also follows from Theorem \ref{mainthm3} that $\Delta^-(D) =1$.
Thus, $D$ is disjoint union of directed paths, cycles or isolated vertices.
Let $uv \in A(D)$ and $w \in V(D) \setminus \{u, v\}$.
It is easy to see that
\[(\{v, w\}, V(D) \setminus \{u,v,w\}, \{u\})\]
is an IDF of $D + uw$ with weight $n-1$.
Thus, we have $r_I(D) = 1$.
\end{proof}

\begin{thm}\label{m-thm1}
Let $D$ be a digraph with $\gamma_I(D) \geq 3$. Then $r_I(D) = 1$ if and only if
there exist a $\gamma_I(D)$-function $f=(V_0, V_1, V_2)$ of $D$ and a vertex $v \in V_1$ satisfying one of the following conditions:
\begin{enumerate}
\item $f(N^-(v))=1$ and $f(N^-(x) \setminus \{v\}) \geq 2$ for each $x \in N^+(v) \cap V_0$.
\item $f(N^-(v))=0$, $f(N^-(x) \setminus \{v\}) \geq 2$ for each $x \in N^+(v)$, and $V_2 \neq \emptyset$.
\end{enumerate}
\end{thm}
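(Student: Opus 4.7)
My plan is to prove the two implications separately. For the reverse direction, suppose $f = (V_0, V_1, V_2)$ and $v \in V_1$ satisfy (i) or (ii). I will exhibit a single new arc $zv$ witnessing $r_I(D) \leq 1$ by letting $g$ agree with $f$ off $v$ and taking $g(v) = 0$; this $g$ has weight $\gamma_I(D) - 1$, and to be an IDF of $D + zv$ it only needs $v$ to become dominated and every $x \in N_D^+(v) \cap V_0$ to remain dominated after losing $v$'s contribution, which is exactly $f(N_D^-(x) \setminus \{v\}) \geq 2$. In case (i), I will pick $z \in (V_1 \cup V_2) \setminus N_D^-[v]$; such a $z$ exists, for otherwise $V_1 \cup V_2 \subseteq N_D^-(v) \cup \{v\}$ and then $\omega(f) \leq f(N_D^-(v)) + f(v) = 2$, contradicting $\gamma_I(D) \geq 3$. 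In case (ii), I will take any $z \in V_2$, which lies outside $N_D^-[v]$ because $f(N_D^-(v)) = 0$ and $v \in V_1$ separates $v$ from $V_2$.

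For the forward direction, assume $r_I(D) = 1$, let $F = \{uv\}$ be an $r_I(D)$-set, and let $g$ be a $\gamma_I(D+F)$-function. By Lemma \ref{lem:b1}, $g(u) \in \{1,2\}$, $g(v) = 0$, and $\omega(g) = \gamma_I(D) - 1$. I will define $f$ on $V(D)$ by $f(v) = 1$ and $f(x) = g(x)$ elsewhere, and then show $f$ is a $\gamma_I(D)$-function with $v \in V_1$. Because $N_{D+F}^-(y) = N_D^-(y)$ for every $y \neq v$, checking that $f$ is an IDF of $D$ at each $y \in V_0^f$ reduces immediately to $g$'s IDF property on $D+F$.

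The central observation driving the rest of the argument is that $g$ restricted to $V(D)$ cannot be an IDF of $D$, since otherwise $\gamma_I(D) \leq \omega(g) < \gamma_I(D)$; and since only $v$'s in-neighborhood changes between $D$ and $D+F$, the failure must occur at $v$, so $g(N_D^-(v)) \leq 1$ and therefore $f(N_D^-(v)) \in \{0,1\}$. If $f(N_D^-(v)) = 1$, I will deduce condition (i); if $f(N_D^-(v)) = 0$, then the IDF condition at $v$ in $D+F$ forces $g(u) = 2$, so $u \in V_2$ yields $V_2 \neq \emptyset$, the first part of condition (ii). In either subcase, for each $x \in N_D^+(v) \cap V_0$, the identity $N_D^-(x) = N_{D+F}^-(x)$ together with $g(v) = 0$ gives $f(N_D^-(x) \setminus \{v\}) = g(N_D^-(x)) \geq 2$.

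The step I expect to be the main obstacle is justifying condition (ii) exactly as stated, namely the inequality $f(N_D^-(x) \setminus \{v\}) \geq 2$ for $x \in N_D^+(v)$ lying outside $V_0$; for such $x$ the IDF condition on $g$ places no direct bound on its in-neighborhood, so the required inequality does not follow mechanically from the naive construction. I expect to resolve this either by choosing $g$ extremally among $\gamma_I(D+F)$-functions (for instance, minimizing $g(N_{D+F}^-(v))$) or by a local weight-swap in $f$ that shifts a unit between $v$ and any offending $x \in V_1$, producing a $\gamma_I(D)$-function for which the stronger form of (ii) genuinely holds.
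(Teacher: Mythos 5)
Your plan reproduces the paper's proof almost step for step. For sufficiency you set $v$'s value to $0$ and add a single arc $zv$ from a positive-weight vertex $z\notin N^{-}[v]$ (the paper's $w$ plays the same role); for necessity you take a $\gamma_I(D+uv)$-function $g$, lift it to $f$ with $f(v)=1$, invoke Lemma \ref{lem:b1} to see that $f$ is a $\gamma_I(D)$-function, observe that $g$ cannot be an IDF of $D$ so the failure is at $v$ and hence $f(N^{-}(v))\le 1$, and split into the cases $f(N^{-}(v))=1$ and $f(N^{-}(v))=0$ (the latter forcing $g(u)=2$, so $V_2\neq\emptyset$), with $f(N^{-}(x)\setminus\{v\})=g(N^{-}(x))\ge 2$ for $x\in N^{+}(v)\cap V_0$ coming from $g$'s IDF condition exactly as in the paper. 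Your existence argument for $z$ in case (i) (if $V_1\cup V_2\subseteq N^{-}[v]$ then $\omega(f)\le 2$) is in fact a slightly more explicit justification than the paper's choice of $w\in(V_1\cup V_2)\setminus\{u,v\}$, which leaves $w\notin N^{-}(v)$ implicit.

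The obstacle you flag at the end --- obtaining $f(N^{-}(x)\setminus\{v\})\ge 2$ for \emph{every} $x\in N^{+}(v)$ in condition (ii), rather than only for $x\in N^{+}(v)\cap V_0$ --- is not a defect of your argument relative to the paper: the paper's own proof establishes the inequality only for $x\in N^{+}(v)\cap V_0^f$ and then simply asserts that (ii) holds, so it has the same gap with respect to the literal statement. Since the sufficiency direction only ever uses the inequality for $x\in N^{+}(v)\cap V_0$ (vertices of $V_1\cup V_2$ need no domination check), the intended reading of (ii) is evidently with $N^{+}(v)\cap V_0$, and under that reading both your argument and the paper's are complete. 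For the literal statement, neither of your suggested repairs (an extremal choice of $g$, or a local weight swap) is carried out in the paper, and neither is obviously workable --- e.g.\ an out-neighbor $x\in V_1$ of $v$ with no other in-neighbor defeats both --- so you should treat the discrepancy as a typo in the statement rather than something your proof must overcome.
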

\begin{proof}
First, assume that (i) holds.
Then it follows from $f(N^-(v))=1$ that there exists $u \in V_1 \cap N^-(v)$.
Since $\gamma_I(D) \geq 3$, there exists $w \in (V_1 \cup V_2) \setminus \{v, u\}$.
Since $uv \in A(D)$ and $f(N^-(x) \setminus \{v\}) \geq 2$ for each $x \in N^+(v) \cap V_0$,
$(V_0 \cup \{v\}, V_1 \setminus \{v\}, V_2)$ is an IDF of $D +wv$ with weight $\gamma_I(D) -1$.
Thus, we have $r_I(D) = 1$.

Next, assume that (ii) holds.
Let $w \in V_2$.
Then it follows from $f(N^-(v))=0$ that $wv \not\in A(D)$.
Since $f(N^-(x) \setminus \{v\}) \geq 2$ for each $x \in N^+(v)$,
$(V_0 \cup \{v\}, V_1 \setminus \{v\}, V_2)$ is an IDF of $D +wv$ with weight $\gamma_I(D) -1$.
Thus, we have $r_I(D) = 1$.

Conversely, assume that $r_I(D) = 1$, and let $uv$ be an arc of $D$ with $\gamma_I(D+uv) <  \gamma_I(D)$.
Let $g$ be a $\gamma_I(D+uv)$-function.
Then $g(u) \neq 0$ and $g(v)=0$ by Lemma \ref{lem:b1}(i).
The function $f : V(D) \rightarrow \{0,1,2\}$ with
\begin{equation*}\label{product}
f(x) = \left\{
                      \begin{array}{ll}
                     1 & \hbox{if $x=v$;} \\
                     g(x) & \hbox{otherwise}
                      \end{array}
                     \right.
\end{equation*}
is an IDF of $D$. It follows from Lemma \ref{lem:b1}(ii) that $f$ is a $\gamma_I(D)$-function.

Suppose that $f(N^-(v)) \geq 2$.
Then $g(N^-(v)) \geq 2$. So, $g$ is an IDF of $D$. This means that $\gamma_I(D) \leq \omega (g) = \gamma_I(D+uv)$, a contradiction.
Thus, we have $f(N^-(v)) \leq 1$.

Note that $f(N^-(x) \setminus \{v\}) = h(N^-(x) \setminus \{v\})  \geq 2$ for each $x \in N^+(v) \cap V_0^f$,
since $g$ is a $\gamma_I(D+uv)$-function with $g(v)=0$.
If $f(N^-(v)) = 1$, then (i) holds.
Now assume that $f(N^-(v)) = 0$.
Then we have $h(u) = f(u) =2$, since $g(v)= 0$ and $u$ is an in-neighbor of $v$ in $D + uv$.
As $V_2^f \neq \emptyset$, (ii) holds.
\end{proof}

\subsection{Bounds of the Italian reinforcement numbers}\label{sec:main3-2}

\begin{thm}
If D is a digraph of order $n$ with  $\gamma_I(D) \geq 3$, then
\[r_I(D) \leq n - \Delta^+(D) - \gamma_I(D) +2.\]
\end{thm}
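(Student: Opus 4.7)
The plan is to exhibit an explicit Italian reinforcement set of the required size by upgrading the canonical IDF coming from Observation~\ref{easy1}. Let $v$ be a vertex with $\deg_D^+(v)=\Delta^+(D)$, and consider the IDF with $f(v)=2$, $f(x)=0$ for $x\in N^+(v)$, and $f(x)=1$ for every $x\in V(D)\setminus(\{v\}\cup N^+(v))$; its weight is $n-\Delta^+(D)+1$. I would introduce the parameter
\[
k\;=\;n-\Delta^+(D)-\gamma_I(D)+2,
\]
and note that Observation~\ref{easy1} together with the hypothesis $\gamma_I(D)\ge 3$ forces $1\le k\le n-1-\Delta^+(D)$, so in particular the set $V(D)\setminus(\{v\}\cup N^+(v))$ (of size $n-1-\Delta^+(D)$) contains at least $k$ vertices.

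Next I would pick any subset $W\subseteq V(D)\setminus(\{v\}\cup N^+(v))$ of size exactly $k$ and set
\[
F\;=\;\{\,vw \nid w\in W\,\}.
\]
Since no vertex of $W$ lies in $N^+(v)$, every arc in $F$ is a genuine new arc, and $|F|=k$. I would then define $g:V(D)\to\{0,1,2\}$ by $g(v)=2$, $g(x)=0$ for $x\in N^+(v)\cup W$, and $g(x)=1$ elsewhere. Every vertex assigned $0$ by $g$ has $v$ as an in-neighbour in $D+F$ with $g(v)=2$, so $g$ is an IDF of $D+F$, with weight
\[
2+(n-1-\Delta^+(D)-k)\;=\;\gamma_I(D)-1.
\]
Hence $\gamma_I(D+F)<\gamma_I(D)$, which by definition yields $r_I(D)\le |F|=n-\Delta^+(D)-\gamma_I(D)+2$.

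There is really no hard step here; the argument is essentially a counting/bookkeeping exercise. The only place that requires a moment of care is checking that the construction is well-defined, namely that there actually exist $k$ vertices outside $\{v\}\cup N^+(v)$ to receive the new arcs; this is precisely where the assumption $\gamma_I(D)\ge 3$ is used (to guarantee $k\le n-1-\Delta^+(D)$). Everything else — the validity of $g$ as an IDF and its weight computation — is mechanical.
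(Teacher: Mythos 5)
Your proof is correct: the count $k=n-\Delta^+(D)-\gamma_I(D)+2$ satisfies $1\le k\le n-1-\Delta^+(D)$ exactly because of Observation~\ref{easy1} and $\gamma_I(D)\ge 3$, the set $F$ consists of genuinely new arcs, and the function $g$ is indeed an IDF of $D+F$ of weight $\gamma_I(D)-1$, so $r_I(D)\le|F|=k$. Your route differs from the paper's in its logical packaging, though both rest on the same ingredients (a maximum out-degree vertex $u$, arcs added out of $u$, and the star-type IDF of Observation~\ref{easy1}). The paper argues indirectly: it first shows $r_I(D)\le n-\Delta^+(D)-1$ by adding all arcs from $u$ to $V(D)\setminus N^+[u]$, then adds only $r_I(D)-1$ such arcs to get a digraph $D'$, uses the minimality of $r_I(D)$ to conclude $\gamma_I(D')=\gamma_I(D)$, and applies Observation~\ref{easy1} to $D'$ (with $\Delta^+(D')=\Delta^+(D)+r_I(D)-1$) to extract the inequality. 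You instead solve for the exact number $k$ of arcs needed to drive the star IDF's weight down to $\gamma_I(D)-1$ and exhibit that reinforcement set explicitly. Your version is more direct and self-contained: it avoids invoking the extremal property of $r_I(D)$, avoids the preliminary bound and the check that $r_I(D)-1$ suitable vertices exist, and produces an explicit witness for the bound, while the paper's argument has the mild appeal of reusing Observation~\ref{easy1} as a black box on the augmented digraph rather than redoing the weight computation.
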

\begin{proof}
Since $\gamma_I(D) \geq 3$, it follows from Theorem \ref{mainthm2-5} that $\Delta^+(D) \leq n-2$.
Let $u$ be a vertex with $deg_D^+(u) =\Delta^+(D)$ and let $R:= \{uv \mid v \in V(D) \setminus N^+[u] \}$.
Then $(V(D) \setminus \{u\}, \emptyset, \{u\})$ is an IDF of $D +R$.
Thus,
\[r_I(D) \leq n - \Delta^+(D) - 1.\]
There exist $r_I(D) - 1$ vertices $v_1, v_2, \dotsc, v_{r_I(D)-1}$ in $V(D) \setminus N^+[u]$.

Let $D'$ be a digraph obtained from $D$ by adding $r_I(D) - 1$ arcs $uv_i$.
Then, by the definition of $r_I(D)$ and Observation \ref{easy1},
\[\gamma_I(D) = \gamma_I(D') \leq n - \Delta^+(D') + 1.\]
Since $\Delta^+(D') = \Delta^+(D) + r_I(D) -1$, we have $r_I(D) \leq n - \Delta^+(D) - \gamma_I(D) +2$.
\end{proof}

\begin{thm}
If $D$ is a digraph such that $\gamma_I(D) =3$ and $\gamma(D) =2$,
then $r(D) \leq r_I(D) + 1$.
\end{thm}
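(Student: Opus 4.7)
The plan is to apply Theorem \ref{mainthm2-5} to $D + F$, where $F$ is an $r_I(D)$-set, and then convert the resulting Italian-domination structure into ordinary domination with at most one additional arc. First I would observe that every digraph of order at least $2$ satisfies $\gamma_I \geq 2$: any function that assigns $1$ to a single vertex and $0$ elsewhere fails the IDF condition at every remaining vertex. Since $\gamma_I(D) = 3$ forces $n \geq 3$, and the definition of $r_I(D)$-set gives $\gamma_I(D + F) \leq \gamma_I(D) - 1 = 2$, this combined with the lower bound yields $\gamma_I(D + F) = 2$.

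Next I would split into the two cases provided by Theorem \ref{mainthm2-5} applied to $D + F$. In the first case, $\Delta^+(D + F) = n - 1$, so some vertex dominates all of $V(D)$ in $D + F$; then $\gamma(D + F) = 1 < \gamma(D)$, and $F$ itself already reinforces the ordinary domination number, giving $r(D) \leq |F| = r_I(D) \leq r_I(D) + 1$. In the second case, there exist distinct vertices $u, v$ with $V(D) \setminus \{u, v\} \subseteq N^+_{D + F}(u) \cap N^+_{D + F}(v)$. If either $uv$ or $vu$ already belongs to $A(D + F)$, then one of $u, v$ is universal in $D + F$ and again $r(D) \leq r_I(D)$. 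In the remaining subcase, $uv$ belongs to $A(\overline{D})$ because it avoids both $A(D)$ and $F$; adding $uv$ to $D + F$ makes $u$ universal, so $F \cup \{uv\}$ is a reinforcement set of size $r_I(D) + 1$ for the ordinary domination number, yielding $r(D) \leq r_I(D) + 1$.

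The only subtle point worth checking is that in the last subcase the candidate extra arc $uv$ is genuinely available, i.e., lies in $A(\overline{D})$; this is immediate since $F \subseteq A(\overline{D})$ is disjoint from $A(D)$ and, by the subcase assumption, $uv$ lies in neither. Everything else reduces to a direct reading of Theorem \ref{mainthm2-5}, so no further casework or computation is needed.
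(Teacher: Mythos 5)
Your proof is correct, and it takes a genuinely different route from the paper. You work directly with $D+F$ for an $r_I(D)$-set $F$: since $\gamma_I(D+F)=2$, Theorem \ref{mainthm2-5} gives either a vertex of out-degree $n-1$ (so $F$ alone already drops $\gamma$ to $1$) or two vertices $u,v$ each out-dominating $V\setminus\{u,v\}$, in which case either one of $uv,vu$ is already present (again $\gamma(D+F)=1$) or the missing arc $uv$ lies in $A(\overline{D})\setminus F$ and $F\cup\{uv\}$ is an ordinary reinforcement set of size $r_I(D)+1$; all the small verifications (that $\gamma_I\geq 2$ for $n\geq 2$, that $n\geq 3$, and that the added arc is available and new) are handled. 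The paper instead deletes one arc $r$ from the $r_I(D)$-set, observes $r_I(D+(R\setminus\{r\}))=1$, and then invokes its characterization of digraphs with Italian reinforcement number one (Theorem \ref{m-thm1}) together with Lemma \ref{lem:b1}, building a one-vertex dominating set by adding at most two further arcs. Your argument is more elementary and self-contained, needing only the $\gamma_I=2$ characterization rather than the $r_I=1$ machinery, and it makes transparent that the hypothesis $\gamma(D)=2$ is only used as $\gamma(D)\geq 2$; the paper's route, while heavier, showcases how Theorem \ref{m-thm1} can be exploited, which is presumably why it is organized that way.
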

\begin{proof}
Let $R$ be a $r_I(D)$-set. Then $\gamma_I(D +R) =2$.
If $r \in R$, then clearly $r_I(D + (R\setminus \{r \})) = 1$.

By Theorem \ref{m-thm1},
there exist a $\gamma_I(D + (R\setminus \{r \}))$-function $f=(V_0, V_1, V_2)$ of $D + (R\setminus \{r \})$ and a vertex $v \in V_1$ satisfying one of the following conditions:
\begin{enumerate}
\item $f(N^-(v))=1$ and $f(N^-(x) \setminus \{v\}) \geq 2$ for each $x \in N^+(v) \cap V_0$.
\item $f(N^-(v))=0$, $f(N^-(x) \setminus \{v\}) \geq 2$ for each $x \in N^+(v)$, and $V_2 \neq \emptyset$.
\end{enumerate}

Suppose that (i) holds.
Since $\gamma_I(D + (R\setminus \{r \})) =3$, it follows from $f(N^-(v))=1$ that there  exists $u \in V_1$ such that $u \not\in N^-(v)$.
Let $w \in V_1 \cap N^-(v)$.
Since $f(N^-(x) \setminus \{v\}) \geq 2$ for each $x \in N^+(v) \cap V_0$,
we have $ux, wx \in A(D + (R\setminus \{r \})$ for each $x \in N^+(v) \cap V_0$.
Since $\gamma_I(D + (R\setminus \{r \})) =3$, we have $u, w \in N^-(x)$ for each $x \in V_0 \setminus N^+(v)$.
Thus, $\{u\}$ is a dominating set of $D + ((R\setminus \{r \}) \cup \{uv, uw\})$.
This implies that $r(D) \leq r_I(D) +1$.

Suppose that (ii) holds.
Let $V_2 = \{u \}$.
Then we have $V_0 \subseteq N^+(u)$.
Thus, $\{u\}$ is a dominating set of $D + ((R\setminus \{r \}) \cup \{uv\})$.
This implies that $r(D) \leq r_I(D)$.
\end{proof}

\subsection{The Italian reinforcement numbers of compositions of digraphs}\label{sec:main3-3}

For two digraphs $G$ and $H$, two kinds of joins $G \rightarrow H$ and $G \leftrightarrow H$ were defined in \cite{HWX}.
The digraph $G \rightarrow H$ consists of $G$ and $H$ with extra arcs from each vertex of $G$ to every vertex of $H$.
The digraph $G \leftrightarrow H$ can be obtained from $G \rightarrow H$ by adding arcs from each vertex of $H$ to every vertex of $G$.

\begin{thm}\label{com1}
Let $G$ and $H$ be two digraphs such that $\Delta^+(G) \geq 1$ and $\Delta^+(H) \geq 1$.
Then
\begin{enumerate}
\item $\gamma_I(G \rightarrow H) = \gamma_I(G)$,
\item $r_I(G \rightarrow H) = r_I(G)$,
\end{enumerate}
\end{thm}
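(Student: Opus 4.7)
For (i), I plan to prove both inequalities. Since $\Delta^+(G) \geq 1$ forces $n(G) \geq 2$, a short argument gives $\gamma_I(G) \geq 2$, so any $\gamma_I(G)$-function $f$ contains either a vertex of value $2$ in $V(G)$ or at least two vertices of value $1$ in $V(G)$. Extending $f$ by $0$ on $V(H)$ then yields an IDF of $G \rightarrow H$, because every $v \in V(H)$ of value $0$ has all of $V(G)$ as in-neighbors; this gives $\gamma_I(G \rightarrow H) \leq \gamma_I(G)$. For the reverse inequality, restricting a $\gamma_I(G \rightarrow H)$-function to $V(G)$ gives an IDF of $G$ for free: since $G \rightarrow H$ has no arcs from $V(H)$ to $V(G)$, the in-neighborhood of any $v \in V(G)$ is the same in $G$ and in $G \rightarrow H$.

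For the easy direction of (ii), assume $\gamma_I(G) \geq 3$ (otherwise both reinforcement numbers vanish by (i) and the definition). Given an $r_I(G)$-set $R \subseteq A(\overline{G})$, I use the identity $(G \rightarrow H) + R = (G + R) \rightarrow H$ together with (i) applied to both $G$ and $G + R$ to conclude $\gamma_I((G \rightarrow H) + R) = \gamma_I(G + R) < \gamma_I(G) = \gamma_I(G \rightarrow H)$, so $R$ is an IRS of $G \rightarrow H$, yielding $r_I(G \rightarrow H) \leq |R| = r_I(G)$.

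For the harder direction $r_I(G \rightarrow H) \geq r_I(G)$, let $F$ be a minimum IRS of $G \rightarrow H$ and partition $F = F_G \cup F_H \cup F_{HG}$, where $F_G$ consists of arcs within $V(G)$, $F_H$ of arcs within $V(H)$, and $F_{HG}$ of arcs from $V(H)$ to $V(G)$. Let $g$ be a $\gamma_I((G \rightarrow H) + F)$-function; by Lemma \ref{lem:b1}(ii), $\omega(g) = \gamma_I(G) - 1$. My aim is to construct an IRS $F^* \subseteq A(\overline{G})$ of $G$ with $|F^*| \leq |F|$, which will give $r_I(G) \leq |F^*| \leq r_I(G \rightarrow H)$. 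When $g(V(H)) = 0$ the choice $F^* := F_G$ works immediately: every $v \in V(G)$ with $g(v) = 0$ that is dominated in $(G \rightarrow H) + F$ draws its positive-weight in-neighbor contribution entirely from $V(G)$, so $g|_{V(G)}$ is already an IDF of $G + F_G$ with weight $\gamma_I(G) - 1 < \gamma_I(G)$.

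The main obstacle is the case $g(V(H)) \geq 1$, in which $g|_{V(G)}$ may fail to be an IDF of $G + F_G$ because certain $V(G)$-vertices rely on $V(H)$-in-neighbors via arcs of $F_{HG}$. The plan is to exploit Lemma \ref{lem:b1}(i) and the minimality of $F$: for any arc $uv \in F_H$, its removal would still leave $v$ dominated by $V(G)$ whenever $g(V(G))$ contains a value-$2$ vertex or two value-$1$ vertices, so $F_H \neq \emptyset$ forces $g(V(G)) \leq 1$; a parallel analysis applied to each arc of $F_{HG}$ shows it is individually essential for its target. Using these constraints I would build $F^*$ by adjoining to $F_G$ a small collection of replacement arcs $u'v \in A(\overline{G})$, where $u' \in V(G)$ is a proxy for a $V(H)$-vertex carrying positive $g$-weight and $v$ is the target of an essential $F_{HG}$-arc, and define a modified $g^* : V(G) \to \{0,1,2\}$ that absorbs the weight $g(V(H))$ onto $u'$ (or, if cheaper, directly onto each problem vertex). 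The tightest step is the bookkeeping: verifying $|F^*| \leq |F_G| + |F_H| + |F_{HG}| = |F|$ and $\omega(g^*) \leq \omega(g) = \gamma_I(G) - 1$, which should follow because each arc of $F_H \cup F_{HG}$ is individually essential by Lemma \ref{lem:b1}(i), bounding the number of required replacement arcs by $|F_H| + |F_{HG}|$.
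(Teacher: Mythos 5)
Your proposal follows essentially the same route as the paper's proof: extension by $0$ on $V(H)$ and restriction to $V(G)$ for (i), the identity $(G \rightarrow H)+R = (G+R) \rightarrow H$ for $r_I(G \rightarrow H) \leq r_I(G)$, and for the reverse inequality the paper likewise keeps the arcs internal to $V(G)$ and adds arcs from one fixed proxy vertex $v \in V(G)$ to each vertex of $V(G)$ whose domination relied on an added $H$-to-$G$ arc, boosting the value at $v$ to $\max\{f(v),\max\{f(N^-(u)\cap V(H))\}\}$ to absorb the weight carried by $V(H)$. The bookkeeping you left open is exactly the paper's (terse) steps $|R_2 \cup R_3| \leq |R_1|$ and $\omega(h) \leq \omega(f)$, and it goes through as you anticipate, since each such problem vertex must be the head of its own added $H$-to-$G$ arc and the proxy's boosted value is at most $\min\{2, f(V(H))\}$ beyond $f(v)$.
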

\begin{proof}
(i) Let $f$ be a $\gamma_I(G)$-function.
Then it follows from the definition of IDF that $f$ is extended to an IDF of $G \rightarrow H$ by assigning $0$ to every vertex of $H$.
Thus, $\gamma_I(G \rightarrow H) \leq \gamma_I(G)$.
On the other hand, if $g=(V_0, V_1, V_2)$ is a $\gamma_I(G \rightarrow H)$-function,
then clearly $g|_G:= (V_0 \cap V(G), V_1 \cap V(G), V_2 \cap V(G))$ is an IDF of $G$.
Thus, $\gamma_I(G) \leq \gamma_I(G \rightarrow H)$.

\vskip5pt
(ii) If $\gamma_I(G) =2$, then it follows from (i) that $\gamma_I(G \rightarrow H) =2$. So, $r_I(G \rightarrow H) = r_I(G)$.
From now on, we assume $\gamma_I(G) \geq 3$.
Let $R$ be a $r_I(G)$-set.
Then
\[\gamma_I((G \rightarrow H) +R) = \gamma_I((G+R) \rightarrow H) = \gamma_I(G +R) < \gamma_I(G) = \gamma_I(G \rightarrow H).\]
Thus, $r_I(G \rightarrow H) \leq r_I(G)$.

Now we claim that $r_I(G) \leq r_I(G \rightarrow H)$.
Let $R_1$ be a $r_I(G \rightarrow H)$-set.
Suppose that $R_2$ is a subset of $R_1$ such that two ends of arcs in $R_2$ lie in $V(G)$.
Let $f = (V_0^f, V_1^f, V_2^f)$ be a $\gamma_I((G \rightarrow H) + R_1)$-function, and
let $g = f|_G$.
We divide our consideration into the following two cases.
\vskip5pt
\textbf{Case 1.} $g$ is an IDF of $G + R_2$.

Then we have
\begin{eqnarray*}
\gamma_I((G \rightarrow H) + R_1) &=& \omega(f) \\
                                  &\geq& \omega(g) \\
                                  &\geq& \gamma_I(G +R_2) \\
                                  &=&  \gamma_I((G +R_2) \rightarrow H) \\
                                  &=& \gamma_I((G \rightarrow H) +R_2) \\
                                  &\geq&  \gamma_I((G \rightarrow H) +R_1).
\end{eqnarray*}
Since $R_2 \subseteq R_1$ and $R_1$ is a $r_I(G \rightarrow H)$-set, we have $R_1 = R_2$.
So, $\gamma_I(G +R_2) \leq \omega(g) = \gamma_I((G \rightarrow H) +R_2) < \gamma_I(G \rightarrow H) = \gamma_I(G)$.
Thus, $r_I(G) \leq |R_2| = |R_1| = r_I(G \rightarrow H)$.

\vskip5pt
\textbf{Case 2.} $g$ is not an IDF of $G + R_2$.

Then some vertex $u \in V_0^f \cap V(G)$ has an in-neighbor $w \in V(H)$ such that $wu \in R_1$.
Fix $v \in V(G)$,
and let $R_3 =\{ vu \mid u \in N \}$,
where $N = \{ u \in V_0^f \cap V(G) \mid$ $u$ does not dominated by the vertices of $G ~\text{under}~ f\}$.
Then clearly $|R_2 \cup R_3| \leq |R_1|$.
It is easy to see that the function $h : V(G) \rightarrow \{0, 1, 2\}$ defined by
$h(v) = max\{ f(v), max\{ f(N^-(u) \cap V(H))  \mid  u \in N \}\}$ and $h(x)=f(x)$ otherwise,
is an IDF of $G +(R_2 \cup R_3)$ with weight at most $\omega(f)$.
Now we have
\begin{eqnarray*}
\gamma_I(G +(R_2 \cup R_3)) &\leq& \omega(h) \\
                                  &\leq& \omega(f) \\
                                  &=& \gamma_I((G \rightarrow H) +R_1) \\
                                  &<& \gamma_I(G \rightarrow H) \\
                                  &=&  \gamma_I(G).
\end{eqnarray*}
Thus, $r_I(G) \leq |R_2 \cup R_3| \leq |R_1| = r_I(G \rightarrow H)$.
\end{proof}

\vskip5pt
The corona $G \overrightarrow{\circ} H$ of two digraphs $G$ and $H$ is formed from one copy of $G$ and $n(G)$ copies of $H$ by joining $v_i$
to every vertex of $H_i$, where $v_i$ is the $i$th vertex of $G$ and $H_i$ is the $i$th copy of $H$.

\begin{thm}\label{com2}
Let $G$ and $H$ be two digraphs with $n(H) \geq 2$.
Then
\begin{enumerate}
\item $\gamma_I(G \overrightarrow{\circ} H) = 2n(G)$,
\item
\begin{equation*}\label{product}
r_I(G \overrightarrow{\circ} H) = \left\{
                      \begin{array}{ll}
                      0 & \hbox{if $n(G)=1$;} \\
                      n(H) & \hbox{if $G$ is the empty digraph and $n(G) \geq 2$;} \\
                      n(H) -1 & \hbox{otherwise.}
                      \end{array}
                     \right.
\end{equation*}
\end{enumerate}
\end{thm}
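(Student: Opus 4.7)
For part (i), I would first exhibit the IDF defined by $f(v_i) = 2$ for each $v_i \in V(G)$ and $f \equiv 0$ on every copy $V(H_i)$; the corona arcs $v_i \to V(H_i)$ cover every zero-vertex, so $\gamma_I(G \overrightarrow{\circ} H) \leq 2n(G)$. For the matching lower bound, I consider an arbitrary IDF $f$ and the local contribution $a_i := f(v_i) + \sum_{h \in V(H_i)} f(h)$. The key structural fact is that every in-neighbor of $h \in V(H_i)$ lies in $\{v_i\} \cup V(H_i)$; combined with $n(H) \geq 2$, a short case split on $f(v_i) \in \{0,1,2\}$ (and on where the possible ``$1$'' sits if $a_i \leq 1$) rules out $a_i \leq 1$. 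Summing $a_i \geq 2$ over $i$ yields $\omega(f) \geq 2n(G)$.

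For part (ii), the case $n(G) = 1$ is immediate since $\gamma_I = 2$ and $r_I$ is defined to be $0$ there. Assume $n(G) \geq 2$. In the ``otherwise'' clause, fix an arc $v_1v_2 \in A(G)$ and a vertex $w \in V(H_2)$, and take $R = \{v_1h : h \in V(H_2) \setminus \{w\}\}$; then assigning $f(v_j) = 2$ for $j \neq 2$, $f(v_2) = 0$, $f(w) = 1$, and $f \equiv 0$ elsewhere gives an IDF of $(G \overrightarrow{\circ} H) + R$ of weight $2n(G) - 1$, so $r_I \leq n(H) - 1$. When $G$ is empty, $v_2$ has no in-neighbor in $G \overrightarrow{\circ} H$, so the same scheme must spend one extra arc: use $R = \{v_1h : h \in V(H_2)\}$ together with $f(v_2) = 1$ (instead of $0$) to get weight $2n(G) - 1$ with $|R| = n(H)$.

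For the matching lower bound, let $R$ be an IRS and $f$ a $\gamma_I$-function of $(G \overrightarrow{\circ} H) + R$. Since $\omega(f) < 2n(G)$, some index $i_0$ has $a_{i_0} \leq 1$. I then count arcs of $R$ whose head lies in $\{v_{i_0}\} \cup V(H_{i_0})$, observing that these are the only arcs that can supply new positive-$f$ in-neighbors to vertices of $V(H_{i_0})$ (corona arcs enter $V(H_{i_0})$ only from $v_{i_0}$). Splitting on $a_{i_0} \in \{0,1\}$ and on the placement of the lone ``$1$'' when $a_{i_0} = 1$, in each subcase every $h \in V(H_{i_0})$ with $f(h) = 0$ (of which there are at least $n(H) - 1$) needs at least one arc of $R$ ending at it, since its corona in-neighbors carry at most a single $f$-value of $1$. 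In the empty-$G$ case, additionally $v_{i_0}$ itself needs an added incoming arc whenever $f(v_{i_0}) = 0$, because $N^-_G(v_{i_0}) = \emptyset$; and if $f(v_{i_0}) = 1$, then $a_{i_0} = 1$ forces $f \equiv 0$ on $V(H_{i_0})$, which already demands $n(H)$ arcs. Tallying gives $|R| \geq n(H) - 1$ generically and $|R| \geq n(H)$ when $G$ is empty.

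The main obstacle I anticipate is the delicate subcase $a_{i_0} = 1$ with $f(h_0) = 1$ for some $h_0 \in V(H_{i_0})$ and $f(v_{i_0}) = 0$: this is the unique configuration realizing the value $n(H) - 1$, and the gap between the two clauses of the formula hinges entirely on whether $v_{i_0}$ can be dominated ``for free'' by an existing in-neighbor in $G$ carrying $f$-value $2$. One must verify both that when $G$ has at least one arc this free domination is arrangeable, and that when $G$ is empty it is genuinely unavailable, so the extra arc is unavoidable. A secondary check --- immediate but worth recording --- is that spreading the $a$-deficit across several copies $V(H_i)$ cannot lower $|R|$, since the arc-demands for distinct copies are edge-disjoint and each already costs at least $n(H) - 1 \geq 1$.
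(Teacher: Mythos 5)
Your proposal is correct and follows essentially the same route as the paper: the same weight-$2$-on-$V(G)$ construction and local sums $a_i$ over $\{v_i\}\cup V(H_i)$ for part (i), the same reinforcement sets (arcs from one $G$-vertex into one copy of $H$, saving one arc when an arc $v_1v_2\in A(G)$ can dominate $v_2$ for free) for the upper bounds, and the same count of $R$-arcs with heads in a deficient copy $U_{i_0}$ for the lower bounds. Your case analysis merely spells out in more detail what the paper states tersely (e.g.\ ``$F$ must contain at least $n(H)$ arcs''), so no substantive difference.
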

\begin{proof}
(i) If $n(G) =1$, then clearly $\gamma_I(G \overrightarrow{\circ} H) = 2$.
Assume that $n(G) \geq 2$.
It is easy to see that $(V(G \overrightarrow{\circ} H) \setminus V(G), \emptyset, V(G))$ is an IDF of $G \overrightarrow{\circ} H$.
So, $\gamma_I(G \overrightarrow{\circ} H) \leq 2n(G)$.

Let $f$ be a $\gamma_I(G \overrightarrow{\circ} H)$-function.
To dominate the vertices of $H_i$,we must have $\sum_{x \in V(H_i) \cup \{v_i\}} f(x) \geq 2$.
Since a single vertex of $G$ does not dominate vertices in different copies of $H$,
we have $\gamma_I(G \overrightarrow{\circ} H) \geq 2n(G)$.

\vskip5pt
(ii)
If $n(G) =1$, then clearly $r_I(G \overrightarrow{\circ} H) = 0$.
Assume that $n(G) \geq 2$.
We divide our consideration into the following two cases.

\vskip5pt
\textbf{Case 1.} $A(G) = \emptyset$.

Let $R = \{ v_1u \mid u \in V(H_{n(G)}) \}$.
Then it is easy to see that
\[(\bigcup_{i=1}^{n(G)} V(H_i), \{v_{n(G)} \}, V(G) \setminus \{v_{n(G)} \})\]
is an IDF of $(G \overrightarrow{\circ} H)+ R$ with weight $2n(G) -1$.
Thus, $r_I(G \overrightarrow{\circ} H) \leq n(H)$.

Let $F$ be a $r_I(G \overrightarrow{\circ} H)$-set.
By Lemma \ref{lem:b1}(ii), $\gamma_I((G \overrightarrow{\circ} H) +F) = \gamma_I(G \overrightarrow{\circ} H) -1$.
Let $U_i = \{v_i \} \cup V(H_i)$ for $1 \leq i \leq n(G)$, and let $f=(V_0, V_1, V_2)$ be a $r_I((G \overrightarrow{\circ} H) +F)$-function.
Then $\sum_{x \in U_i} f(x) \leq 1$ for some $i$, say $i=n(G)$.
To dominate the vertices in $U_{n(G)}$,
$F$ must contain at least $n(H)$ arcs which go from some vertices in $(V_1 \cup V_2) \cap (\bigcup_{i=1}^{n(G)-1}U_i)$
to vertices in $U_{n(G)}$.
Thus, $|F| \geq n(H)$ and so $r_I(G \overrightarrow{\circ} H) \geq n(H)$.

\vskip5pt
\textbf{Case 2.} $A(G) \neq \emptyset$.

Without loss of generality, we assume that $v_1v_{n(G)} \in A(G)$.
Let $V(H_{n(G)}) = \{ w_1, \dotsc, w_{n(H)} \}$, and let $R = \{ v_1w_j \mid w_j \in V(H_{n(G)}) \setminus \{ w_1\} \}$.
Then
\[(\bigcup_{i=1}^{n(G)} V(H_i) \cup \{v_{n(G)} \}, \{w_1\}, \{v_1, \dotsc, v_{n(G)-1}\} )\]
is an IDF of $(G \overrightarrow{\circ} H)+ R$ with weight $2n(G) -1$.
Thus, $r_I(G \overrightarrow{\circ} H) \leq n(H) -1$.
By using the same argument given in Case 1,
one can show that $r_I(G \overrightarrow{\circ} H) \geq n(H) -1$.
\end{proof}

\bibstyle{plain}

\end{document}